\documentclass[a4pager,12pt]{article}

\usepackage{latexsym}
\usepackage{amsmath}
\usepackage{amssymb}
\usepackage{color}
\usepackage{epsfig}
\usepackage{theorem}
\usepackage{float}
\usepackage{graphicx}

\restylefloat{table}
\restylefloat{figure}

\newtheorem{theorem}{Theorem}

\newtheorem{observation}{Observation}

\newenvironment{texteqn}
{\begin{equation}\begin{minipage}{0.8\linewidth}}
   {\end{minipage}\end{equation}}

\newenvironment{texteqn*}
{\begin{eqnarray*}\begin{minipage}{0.8\linewidth}}
   {\end{minipage}\end{eqnarray*}}

\newcommand{\eqn}{\begin{texteqn}}
\newcommand{\eofeqn}{\end{texteqn}}

\newcommand{\eqnn}{\begin{texteqn*}}
\newcommand{\eofeqnn}{\end{texteqn*}}

\newcommand{\proof}{{\em Proof.}~}
\newcommand{\qed}{\hfill $\square$}

\title{Two characterizations of chain partitioned probe graphs\thanks{Annals of Operations Research 188 (2011) 279-283. DOI: 10.1007/s10479-010-0749-3}}

\author{ 
  Van Bang Le\\[1ex]
  {\footnotesize
  Universit\"at Rostock, Institut f\"ur Informatik, 
  D-18051 Rostock, Germany.}\\[.5ex]
  {\footnotesize \texttt{le@informatik.uni-rostock.de}}
}
\date{}

\begin{document}
\maketitle

\begin{abstract}
\noindent Chain graphs are exactly bipartite graphs without induced $2K_2$ (a graph with four vertices and two 
disjoint edges). A graph $G=(V,E)$ with a given independent set $S\subseteq V$ (a set of pairwise non-adjacent 
vertices) is said to be a chain partitioned probe graph if $G$ can be extended to a chain graph by adding 
edges between certain vertices in $S$. 
In this note we give two characterizations for chain partitioned probe graphs. The first one describes chain 
partitioned probe graphs by six forbidden subgraphs. The second one characterizes these graphs via a certain 
``enhanced graph'': $G$ is a chain partitioned probe graph if and only if the enhanced graph $G^*$ is a chain graph. 
This is analogous to a result on interval (respectively, chordal, threshold, trivially perfect) partitioned probe 
graphs, and gives an $O(m^2)$-time recognition algorithm for chain partitioned probe graphs.

\medskip
\noindent\textbf{Keywords.} Probe graphs, chain graphs, combinatorial problems.
\end{abstract}


\bigskip
\noindent 
Given a graph class $\mathcal{C}$, a graph $G=(V,E)$ is called a \emph{$\mathcal{C}$ probe graph} if there exists 
an independent set $S\subseteq V$ and a set $E'\subseteq\binom{S}{2}$ such that the graph $G'=(V,E\cup E')$ is in 
the class $\mathcal{C}$, where $\binom{S}{2}$ stands for the set of all 2-element subsets of $S$. 
Probe graphs have been investigated for various graph classes; see~\cite{GolMafMor} and 
the literature given there for more information. 

In case of bipartite graphs without induced $2K_2$ (a graph with four vertices and two disjoint edges), also called 
\emph{chain graphs} by Yannakakis~\cite{Yannakakis} or \emph{difference graphs} by Hammer et al.~\cite{HamPelSun}, 
Golumbic et al.~\cite{GolMafMor} very recently found a characterization of chain probe graphs as follows; 
notice that chain probe graphs are necessarily bipartite.
 
\begin{theorem}[\cite{GolMafMor}]\label{thm:chainprobe}
Let $G$ be a bipartite graph. Then $G$ is a chain probe graph if and only if it is 
$\{C_6,P_7,3K_2,H_1,H_2,H_3\}$-free.
\end{theorem}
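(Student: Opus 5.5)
The plan is to prove the two directions separately. Necessity is a finite check. Sufficiency goes through the partitioned version of the problem: for a fixed independent set $S$ we force edges inside $S$, and then we choose $S$ well.

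\emph{Necessity.} Being a chain probe graph is hereditary for induced subgraphs. Indeed, if $G'=(V,E\cup E')$ is a chain graph with $E'\subseteq\binom{S}{2}$, then for $U\subseteq V$ the graph $G'[U]$ is a chain graph obtained from $G[U]$ by adding edges inside the independent set $S\cap U$. So it suffices to check that none of $C_6,P_7,3K_2,H_1,H_2,H_3$ is a chain probe graph.

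For each such graph $F$ and each independent set $S$ of $F$, we argue that no set of added edges inside $S$ gives a bipartite $2K_2$-free graph. The key observation is this: if $ab,cd$ induce a $2K_2$ in $F$, then some added edge must join $\{a,b\}$ to $\{c,d\}$, so at least one of $ac,ad,bc,bd$ has both ends in $S$. Moreover, no added edge may close an odd cycle.

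For $3K_2$, the three edges give three pairwise $2K_2$'s. Each edge has at most one end in $S$, so at least one added edge is needed between each pair of edges. Each needed edge joins the $S$-ends of two of the three edges, so the added edges form a triangle on the $S$-ends, which is not bipartite. $C_6$ and $P_7$ are handled the same way by a short case analysis on $|S|$ and its position. The $H_i$ require the same finite check.

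\emph{Sufficiency.} We first fix an independent set $S$ and define the \emph{forced} edges: $xy\in\binom{S}{2}$ is forced if there are edges $xa,yb$ of $G$ with $a,b\notin S$ such that $xa,yb$ induce a $2K_2$. In that case $xy$ is the only non-edge between the two edges with both ends in $S$, so it must be added.

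Let $G^*_S$ be $G$ plus all forced edges. We then show that $G$ is chain probe with respect to $S$ iff $G^*_S$ is a chain graph. The forward direction holds because every forced edge must be present in any completion, and then one checks that the remaining $2K_2$'s of $G^*_S$ cannot be repaired. The backward direction is immediate.

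The remaining task is to choose $S$ so that $G^*_S$ is bipartite and $2K_2$-free whenever $G$ avoids the six graphs. Here we use the bipartition $(X,Y)$ of $G$. The natural candidates are $S$ consisting of the vertices that have degree $0$ in $G-S$ on one side, or, more generally, the vertices of the ``extra'' nontrivial components. A chain graph has at most one nontrivial component, so the nontrivial components of $G$ must be linked through $S$.

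The $3K_2$ obstruction shows that at most two mutually ``incomparable'' edges can occur. $P_7$ and $C_6$ show that within a single component the neighbourhoods can fail to be nested in only a limited way. We would take a $2K_2$ $\{ab,cd\}$ in $G^*_S$ for the best choice of $S$ and, by case analysis on how $a,b,c,d$ meet $S$, extend it to one of the six forbidden graphs.

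This last step, building $C_6,P_7,3K_2$ or some $H_i$ from a surviving $2K_2$ or an odd cycle created by forced edges, is the main obstacle. It is also where the precise shapes of $H_1,H_2,H_3$ come in. We would first handle the case where the forced edges create an odd cycle, expecting $C_6$ or an $H_i$. We would then handle a surviving $2K_2$, expecting $P_7$ or $3K_2$.
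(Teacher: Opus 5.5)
Your necessity half is sound. Being chain probe is hereditary, the rest is a finite check, and your $3K_2$ argument (the three forced edges form a triangle on the $S$-ends) is correct. Reducing to a fixed $S$ is also legitimate. However, the equivalence ``$G$ is chain probe with respect to $S$ iff $G^*_S$ is a chain graph'' is not something one simply checks. It is exactly Theorem~\ref{thm:chainpprobe} of this note: its direction (i)$\Rightarrow$(iii) goes through $F_1,\ldots,F_6$ and Claims 1--3, so quote it rather than assert it. With it, sufficiency becomes: find an independent $S$ such that $(G,S)$ contains none of $F_1,\ldots,F_6$. All the content of the theorem lies there, and your proposal does not do it. (The note only cites this result, so there is no in-paper proof to follow.)

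The gap is that you never specify $S$, and nothing links a failure to the six graphs. Your candidates do not work. ``Vertices of degree $0$ in $G-S$'' defines $S$ in terms of itself. Any rule based on components or on one colour class already fails for $P_6=v_1\cdots v_6$. This graph satisfies the hypotheses, is connected and not chain, and admits exactly one admissible set, $S=\{v_2,v_5\}$. Its colour classes $\{v_1,v_3,v_5\}$ and $\{v_2,v_4,v_6\}$, the only other independent sets meeting every edge that lies in an induced $2K_2$, each force an odd cycle.

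``Take a $2K_2$ of $G^*_S$ for the best choice of $S$'' yields nothing without a stated extremal criterion, because a bad $2K_2$ for one $S$ says nothing about other choices. The theorem is precisely the global claim that all choices fail only in the presence of one of the six graphs.

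What is missing is the structure of the constraints on $S$. Let $(X,Y)$ be the bipartition of $G$.
\begin{itemize}
\item By $F_1,F_2$, each induced $2K_2$ needs exactly one end of each of its edges in $S$.
\item Two such edges sharing a vertex must then choose ends on the same side of $(X,Y)$.
\item For connected $G$, the chosen ends of an induced $2K_2$ must lie on opposite sides, otherwise the forced edge closes an odd cycle; this also rules out $F_4,F_5,F_6$.
\item $S$ must remain independent along edges lying in no induced $2K_2$.
\end{itemize}
This is a two-colouring problem on the edges lying in induced $2K_2$'s. One must show its minimal obstructions are exactly the six graphs:
\begin{itemize}
\item In $C_6$ and $P_7$, shared vertices chain all edges into one class, while two of them form a $2K_2$.
\item In $H_1,H_2,H_3$, every admissible colouring puts both ends of some edge lying in no induced $2K_2$ into $S$ (in $H_1$, one of the two middle edges).
\item Disconnected $G$ must be handled separately: there the bipartition of the completion is not fixed, and $3K_2$-freeness leaves at most two nontrivial components.
\end{itemize}
None of this is carried out. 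You rightly call it ``the main obstacle'', but as written the sufficiency direction is unproved.
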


Here, given a graph $H$, a graph is $H$-free if it does not have an induced subgraph isomorphic to $H$. 
$C_k$ (respectively, $P_k$) stands for the chordless cycle (respectively, the chordless path) with $k$ vertices, 
and $kK_2$ is the graph with $2k$ vertices and $k$ disjoint edges. See Fig.~\ref{fig:H13} for the graphs $H_1, H_2$ 
and~$H_3$.

\medskip

\begin{figure}[ht]
\begin{center}
\unitlength 0.45mm
\begin{picture}(240,33)
 \put(0,10 ){\line(1,0){60}}
 \put(0,30){\line(1,0){60}}    
 \put(0,10){\line(0,1){20}}
 \put(20,10){\line(0,1){20}}    
 \put(40,10){\line(0,1){20}}     
 \put(60,10){\line(0,1){20}}
    
 \put(0,10 ){\circle*{4}}  
 \put(0,30 ){\circle*{4}} 
 \put(20,10 ){\circle*{4}}  
 \put(20,30 ){\circle*{4}} 
 \put(40,10 ){\circle*{4}}  
 \put(40,30 ){\circle*{4}} 
 \put(60,10 ){\circle*{4}}  
 \put(60,30 ){\circle*{4}} 

 \put(25,-2){\small $H_1$}

 \put(90,10 ){\line(1,0){60}}
 \put(90,30){\line(1,0){60}}    
 \put(90,10){\line(0,1){20}}
 \put(110,10){\line(0,1){20}}    
 \put(130,10){\line(0,1){20}}     
    
 \put(90,10 ){\circle*{4}}  
 \put(90,30 ){\circle*{4}} 
 \put(110,10 ){\circle*{4}}  
 \put(110,30 ){\circle*{4}} 
 \put(130,10 ){\circle*{4}}  
 \put(130,30 ){\circle*{4}} 
 \put(150,10 ){\circle*{4}}  
 \put(150,30 ){\circle*{4}} 

 \put(115,-2){\small $H_2$}

 \put(180,10 ){\line(1,0){60}}
 \put(180,30){\line(1,0){60}}    
 \put(200,10){\line(0,1){20}}    
 \put(220,10){\line(0,1){20}}     
    
 \put(180,10 ){\circle*{4}}  
 \put(180,30){\circle*{4}} 
 \put(200,10 ){\circle*{4}}  
 \put(200,30 ){\circle*{4}} 
 \put(220,10 ){\circle*{4}}  
 \put(220,30 ){\circle*{4}} 
 \put(240,10 ){\circle*{4}}  
 \put(240,30 ){\circle*{4}} 

 \put(205,-2){\small $H_3$}
\end{picture}
\caption{The graphs $H_1, H_2$ and $H_3$}\label{fig:H13} 
\end{center}
\end{figure}

Notice that Theorem~\ref{thm:chainprobe} is good in the sense that it implies a polynomial-time recognition 
algorithm for chain probe graphs. In fact, as pointed out in~\cite{GolMafMor}, an $O(n^2)$-algorithm can be derived 
from the proof of Theorem~\ref{thm:chainprobe}.
 
A graph $G=(V,E)$ with a \emph{given} independent set $S\subseteq V$ is said to be a \emph{$\mathcal{C}$ partitioned 
probe graph} if there exists a set $E'\subseteq\binom{S}{2}$ such that the graph $G'=(V,E\cup E')$ is in the class 
$\mathcal{C}$. Recognizing $\mathcal{C}$ partitioned probe graphs is a special case of the $\mathcal{C}$-\textsc{graph 
sandwich} problem (cf. \cite{GolKapSha}). Notice that a $\mathcal{C}$ probe graph need not be a $\mathcal{C}$ partitioned 
probe graph with respect to the given independent set.

Specifically, Theorem~\ref{thm:chainprobe} does not settle the partitioned case. Our goal is to 
characterize chain partitioned probe graphs. The main results of this note (Theorem~\ref{thm:chainpprobe}) consist of 
\begin{itemize}
 \item a characterization of chain partitioned probe graphs by six forbidden 
   induced subgraphs,
 \item a characterization of chain partitioned probe graphs by a certain ``enhanced'' graph, 
   saying that $G$ is a chain partitioned probe graph if and only if the enhanced graph $G^*$ 
   is a chain graph.
\end{itemize}
While the first result is related to Theorem~\ref{thm:chainprobe} in the unpartitioned case, the second one 
is analogous to some interesting results in the literature: \cite{Zhang} shows that a certain enhanced graph 
of an interval partitioned probe graph must be chordal. \cite{CohGolLipSte} shows that a bipartite graph without 
induced cycles of length $>6$ is chordal-bipartite partitioned probe if a certain enhanced graph is chordal bipartite; 
\cite{GolLip} (see also \cite{CohGolLipSte}) shows that a graph without chordless cycles of length at least 
five is chordal partitioned probe if and only if 
a certain enhanced graph is chordal. \cite{BayLedeR} shows that a graph is a threshold (a trivially perfect) 
partitioned probe graph if and only if a certain enhanced graph is a threshold (a trivially perfect) graph;  
see~\cite{BraLeSpi,Golumbic} for graph classes not defined here.

We remark that our characterizations are good in the sense that they imply polynomial-time recognition algorithms 
for chain partitioned probe graphs. 

\subsection*{Characterizing chain partitioned probe graphs}
Let $G=(V,E)$ be a graph with a given independent set $S\subseteq V$. Suppose there exists a set 
$E'\subseteq \binom{S}{2}$ such that the graph $G'=(V,E\cup E')$ is a chain graph, that is, $G$ is a chain 
partitioned probe graph with respect to the given independent set $S$. Then, clearly, 
any induced $2K_2$ in $G$ must have exactly two vertices $x,y$ in $S$ and $\{x,y\}$ must belong to $E'$. 

In what follows, given a graph $G=(V,E)$ together with an independent set $S\subseteq V$, the \emph{enhanced graph} 
$G^*=(V,E^*)$ is obtained from $G$ by adding all edges between two vertices in $S$ that are two vertices 
of an induced $2K_2$ in $G$. 

Chain partitioned probe graphs can be characterized as follows; see Fig.~\ref{fig:F16} for the graphs 
$F_1,\ldots, F_6$.
\begin{theorem}\label{thm:chainpprobe}
Let $G=(V,E)$ be a bipartite graph, and let $S\subseteq V$ be an independent set of vertices of $G$. 
Then the following statements are equivalent.
\begin{itemize}
 \item[\em (i)] $G$ is a chain partitioned probe graph with respect to $S$;
 \item[\em (ii)] $G$ is $\{F_1, F_2, F_3, F_4, F_5, F_6\}$-free;
 \item[\em (iii)] $G^*$ is a chain graph.
\end{itemize}
\end{theorem}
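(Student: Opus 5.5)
We prove (i)$\Rightarrow$(iii)$\Rightarrow$(i) first, and then deal with (ii) through the enhanced graph. Throughout, the added edges lie inside the independent set $S$. The natural object is therefore the bipartite-like structure of $G^*$: the edges of $G^*$ inside $S$ form the set $E^*\setminus E$.

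For (iii)$\Rightarrow$(i), take $E'=E^*\setminus E\subseteq\binom{S}{2}$. Then $G'=G^*$ is a chain graph by assumption.

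For (i)$\Rightarrow$(iii), let $E'$ be any set with $G'=(V,E\cup E')$ a chain graph. As observed before the statement, every induced $2K_2$ of $G$ has exactly two vertices $x,y\in S$, and $xy$ must belong to $E'$. Hence $E^*\setminus E\subseteq E'$, so $G^*$ lies between $G$ and $G'$. Two facts then have to be checked.

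\begin{itemize}
\item[(a)] $G^*$ is bipartite. It is a subgraph of the bipartite graph $G'$, so this holds.
\item[(b)] $G^*$ has no induced $2K_2$. Let $ab,cd$ be an induced $2K_2$ of $G^*$. In $G'$ it is no longer induced, so $G'$ contains an extra edge among $a,b,c,d$. That edge lies in $E'\setminus E^*$, which forces it to join two vertices of $S$, say $a,c\in S$.
\end{itemize}

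The argument for (b) then continues by cases according to which of the edges $ab,cd$ lie in $E$ and which in $E^*\setminus E$. The edge $ac$ makes $G'$ contain the path $b\,a\,c\,d$. If both $ab,cd\in E$, then $\{a,b,c,d\}$ is an induced $2K_2$ of $G$, since $G$ has fewer edges than $G^*$ among these vertices. By the definition of $G^*$ this gives $ac\in E^*$, a contradiction. If, say, $ab\in E^*\setminus E$, then $a,b\in S$, and $ab$ comes from an induced $2K_2$ $\{a,a',b,b'\}$ of $G$ with $aa',bb'\in E$. We then combine these witnesses with $cd$ and $ac$ in $G'$ to obtain either an odd cycle, contradicting bipartiteness of $G'$, or an induced $2K_2$ in $G'$. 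I expect this case analysis to be the main bookkeeping, but it is finite and local.

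For (ii), I would show that $G^*$ fails to be a chain graph if and only if $G$ contains one of $F_1,\dots,F_6$. The direction ``$F_i$ present $\Rightarrow$ not chain partitioned probe'' is a direct check for each small graph with its prescribed $S$-vertices. In each such check, the forced edges (pairs in $S$ lying in an induced $2K_2$) either produce an odd cycle or leave an induced $2K_2$ that no edge inside $S$ can destroy. For the converse, suppose $G^*$ is not a chain graph. Then $G^*$ contains an odd cycle or an induced $2K_2$. We take a minimal such obstruction and replace each edge of $E^*\setminus E$ in it by its witnessing $2K_2$ from $G$. This yields a subgraph of bounded size, at most the obstruction's vertices plus two witnesses per forced edge. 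The main obstacle is then to show that minimality limits the configurations to short ones: odd cycles in $G^*$ of length $3$ or $5$, and $2K_2$'s using at most two forced edges. Reducing each of these to an induced member of $\{F_1,\dots,F_6\}$ requires the case analysis that identifies exactly the six graphs of Fig.~\ref{fig:F16}. I would first try to show that any odd cycle of $G^*$ yields a triangle or a $2K_2$ with a forced edge, so that only a few base patterns remain.
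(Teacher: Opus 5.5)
Your (iii)$\Rightarrow$(i) is correct. Your direct proof of (i)$\Rightarrow$(iii) can also be completed as you expect; the paper never gives it and obtains (i)$\Rightarrow$(iii) only through (ii). Use the colour classes of $G'$. In the mixed case this forces $a'd,\,b'c\in E$, and then $\{a',d,b,b'\}$ is an induced $2K_2$ of $G'$. In the doubly forced case (witnesses $a',b'$ of $ab$ and $c',d'$ of $cd$) it forces $a'c',\,b'd'\in E$, and then $\{a',c',b',d'\}$ is an induced $2K_2$ of $G'$. Together with the check on $F_1,\dots,F_6$ and heredity, you thus have (i)$\Leftrightarrow$(iii) and (i)$\Rightarrow$(ii).

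The genuine gap is the one hard implication, (ii)$\Rightarrow$(iii). For it you give only a plan and defer ``the case analysis that identifies exactly the six graphs''; that case analysis is the content of the theorem. What you call the main obstacle is actually easy: once $G^*$ is $2K_2$-free, a shortest odd cycle is induced and has length $3$ or $5$. What you treat as routine is not. The subgraph of $G$ spanned by an obstruction and its witnesses is in general not an $F_i$, since witnesses may coincide with or be adjacent to other vertices. Many branches end not with an $F_i$ but with a contradiction to one of:
\begin{itemize}
\item the definition of $G^*$ (a new induced $2K_2$ of $G$ whose two $S$-vertices are non-adjacent in $G^*$);
\item bipartiteness of $G$;
\item an earlier claim.
\end{itemize}
Your proposed first step, turning any odd cycle of $G^*$ into a triangle or a $2K_2$, cannot work inside $G^*$, since an induced $C_5$ contains neither.

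The missing ingredient is a local lemma like the paper's Observation~1. Let $xy\in E^*\setminus E$ have witnesses $x_y,y_x$, and let $z\not\sim^* x,y$.
\begin{itemize}
\item Then $z$ is non-adjacent in $G$ to $x_y$ and $y_x$; otherwise there is an $F_2$, $F_4$ or $F_6$, or $z$ would be $G^*$-adjacent to $x$ or $y$.
\item If moreover $z\in S$, every $v\in N_G(z)$ is adjacent, among $x,y,x_y,y_x$, to exactly $x,y_x$ or exactly $y,x_y$; otherwise there is an $F_4$ or an $F_5$, or $z$ would be $G^*$-adjacent to $x$ or $y$.
\end{itemize}
With this lemma the paper proves three claims, in this order:
\begin{itemize}
\item $G^*$ is $2K_2$-free. $F_1$, $F_2$ and the lemma force both edges of the $2K_2$ to be forced edges, and then a witness $c_d$ of $cd$ violates part (2) with $z=c$.
\item $G^*$ has no induced $C_5$. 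Bipartiteness of $G$ and $F_4,F_6$ give three consecutive $S$-vertices $a,b,c$. The $2K_2$-freeness of $G^*$, just proved, then forces $d\sim b_c$ and $e\sim b_a$, which yields a $C_5$ or an $F_6$ in $G$.
\item $G^*$ is triangle-free. If $c\notin S$, then $G[a_b,a,c,b,b_a]$ is an $F_5$. Otherwise $F_5$, $F_2$, $F_4$ and bipartiteness make $a_b,b_a,c_a$ pairwise non-adjacent, each adjacent only to its own triangle vertex, so $\{a,a_b,b,b_a,c,c_a\}$ induces an $F_3$.
\end{itemize}
Neither the lemma nor this ordering, which lets the $C_5$ case use the $2K_2$ case, is present in your proposal. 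So (ii)$\Rightarrow$(iii) remains unproved.
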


\begin{figure}[ht]
\begin{center}
\unitlength 0.45mm
\begin{picture}(245,43)
 \put(0,10){\line(0,1){20}}
 \put(15,10){\line(0,1){20}}    
    
 \put(0,10 ){\circle*{4}}  
 \put(0,30 ){\circle*{4}} 
 \put(15,10 ){\circle*{4}}  
 \put(15,30 ){\circle*{4}}  

 \put(4,-3){\small $F_1$}

 \put(40,12){\line(0,1){18}}
 \put(55,10){\line(0,1){20}}    
    
 \put(40,10 ){\circle{4}}  
 \put(40,30 ){\circle*{4}} 
 \put(55,10 ){\circle*{4}}  
 \put(55,30 ){\circle*{4}}  

 \put(43,-3){\small $F_2$}

 \put(80,12){\line(0,1){18}}
 \put(95,12){\line(0,1){18}}    
 \put(110,12){\line(0,1){18}}    
 
 \put(80,10 ){\circle{4}}  
 \put(80,30 ){\circle*{4}} 
 \put(95,10 ){\circle{4}}  
 \put(95,30 ){\circle*{4}}  
 \put(110,10 ){\circle{4}}  
 \put(110,30 ){\circle*{4}}  

 \put(91,-3){\small $F_3$}

 \put(135,12){\line(0,1){18}}
 \put(155,12){\line(0,1){18}}    
 \put(135,30){\line(1,1){11}}
 \put(155,30){\line(-1,1){11}}
  
 \put(135,10 ){\circle{4}}  
 \put(135,30 ){\circle*{4}} 
 \put(155,10 ){\circle{4}}  
 \put(155,30 ){\circle*{4}}  
 \put(145,40){\circle*{4}}

 \put(140,-3){\small $F_4$}

 \put(180,10){\line(0,1){18}}
 \put(200,10){\line(0,1){18}}    
 \put(181.5,31.5){\line(1,1){8.5}}
 \put(198.5,31.5){\line(-1,1){8.5}}
  
 \put(180,10 ){\circle*{4}}  
 \put(180,30 ){\circle{4}} 
 \put(200,10 ){\circle*{4}}  
 \put(200,30 ){\circle{4}}  
 \put(190,40){\circle*{4}}

 \put(187,-3){\small $F_5$}

 \put(225,12){\line(0,1){18}}
 \put(245,12){\line(0,1){18}}    
 \put(225,30){\line(1,1){8.6}}
 \put(245,30){\line(-1,1){8.6}}
  
 \put(225,10 ){\circle{4}}  
 \put(225,30 ){\circle*{4}} 
 \put(245,10 ){\circle{4}}  
 \put(245,30 ){\circle*{4}}  
 \put(235,40){\circle{4}}

 \put(231,-3){\small $F_6$}
\end{picture}
\caption{Forbidden induced subgraphs for chain partitioned probe graphs; white vertices in $S$, black vertices 
in $V\setminus S$}\label{fig:F16} 
\end{center}
\end{figure}
\proof 
By inspection, it can be seen that the graphs $F_1,\ldots, F_6$ depicted in Fig.~\ref{fig:F16} are not 
chain partitioned probe graphs with respect to the black/white partition. Hence the implication (i) $\Rightarrow$ 
(ii) holds. The implication (iii) $\Rightarrow$ (i) is obvious. 

We now are going to show (ii) $\Rightarrow$ (iii). We write $x\sim y$ and $x\sim^* y$ ($x\not\sim y$ and 
$x\not\sim^* y$), meaning vertex $x$ is adjacent (non-adjacent) to vertex $y$ in $G$, respectively, in $G^*$. 
$N_G(x)$ (respectively, $N_{G^*}(x)$) denotes the neighborhood of $x$ in $G$ (respectively, in $G^*$). 
For $U\subseteq V$, $G[U]$ (respectively, $G^*[U]$) is the subgraph of $G$ (respectively, of $G^*$), induced by $U$. 

Assuming (ii), the following observation will simplify some arguments in proving (iii).

\begin{observation}\label{obs}
Suppose $x, y\in S$, $z\in V$ such that $x\sim^*y$ and $z\not\sim^* x,y$. 
Let $x_y\in N_G(x)$, $y_x\in N_G(y)$ such that $G[x,x_y, y,y_x]$ is a $2K_2$ in $G$. 
Then
\begin{itemize}
 \item[\em (1)] $z\not\sim x_y, y_x$;
 \item[\em (2)] If $z\in S$, then, for all $v\in N_G(z)$, $v\sim x, y_x$ and 
    $v\not\sim y, x_y$, or $v\not\sim x, y_x$ and $v\sim y, x_y$.
\end{itemize} 
\end{observation}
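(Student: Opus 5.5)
The plan is to derive every claim of the Observation by contradiction: each violation will produce either an induced $2K_2$ that would force $z\sim^* x$ or $z\sim^* y$, or one of the forbidden graphs $F_1,\ldots,F_6$ of (ii). From Fig.~\ref{fig:F16}, I read the forbidden graphs as follows.
\begin{itemize}
 \item $F_2$ is a $2K_2$ with exactly one vertex in $S$.
 \item $F_4$ is an induced $P_5$ whose two ends are in $S$ and whose other three vertices are outside $S$.
 \item $F_5$ is an induced $P_5$ $b$--$w$--$b$--$w$--$b$, with ends and centre outside $S$ and the two remaining vertices in $S$.
 \item $F_6$ is an induced $P_5$ whose ends and centre are in $S$ and whose other two vertices are outside $S$.
\end{itemize}
Some basic facts will be used throughout. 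Since $S$ is independent and $x\sim x_y$, $y\sim y_x$, both $x_y$ and $y_x$ lie outside $S$. Since $z\not\sim^* x,y$, in particular $z\not\sim x,y$ in $G$. Since $G$ is bipartite, it has no triangles.

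For (1), suppose $z\sim x_y$; the case $z\sim y_x$ is symmetric. First suppose also $z\sim y_x$. Then $x$--$x_y$--$z$--$y_x$--$y$ is an induced $P_5$. It is induced because the non-edges $x\not\sim y_x$, $y\not\sim x_y$ and $x_y\not\sim y_x$ hold in the $2K_2$, and $z\not\sim x,y$. Its ends $x,y$ are in $S$ and $x_y,y_x\notin S$. So it is $F_4$ if $z\notin S$ and $F_6$ if $z\in S$, a contradiction. Hence $z\not\sim y_x$. Then $G[z,x_y,y,y_x]$ is an induced $2K_2$, since all four cross pairs are non-edges. If $z\in S$, this $2K_2$ has its two $S$-vertices $z,y$, so the definition of $G^*$ gives $z\sim^* y$, a contradiction. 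If $z\notin S$, the $2K_2$ has exactly one vertex in $S$, namely $y$, so it is $F_2$, again a contradiction.

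For (2), let $z\in S$ and $v\in N_G(z)$. Then $v\notin S$, and by (1) $v\neq x_y,y_x$. I split according to the adjacencies of $v$ to $x$ and $y$.
\begin{itemize}
 \item If $v\sim x$ and $v\sim y$, then by triangle-freeness $v\not\sim x_y,y_x$. So $x_y$--$x$--$v$--$y$--$y_x$ is an induced $P_5$ of type $F_5$.
 \item If $v\not\sim x$ and $v\not\sim y$, I first show $v\sim x_y$. Otherwise $G[z,v,x,x_y]$ is a $2K_2$: we have $z\not\sim x$, $z\not\sim x_y$ by (1), and $v\not\sim x,x_y$. It contains the $S$-vertices $z,x$, which forces $z\sim^* x$, a contradiction. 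Symmetrically $v\sim y_x$. Then $x$--$x_y$--$v$--$y_x$--$y$ is an induced $F_4$.
 \item Hence $v$ is adjacent to exactly one of $x,y$, say $x$. Then $v\not\sim x_y$ by triangle-freeness. If $v\not\sim y_x$, then $G[z,v,y,y_x]$ is a $2K_2$ whose $S$-vertices are $z,y$, using (1) for $z\not\sim y_x$. This forces $z\sim^* y$, a contradiction. So $v\sim y_x$, which is the first alternative of (2); the case $v\sim y$ gives the second by symmetry.
\end{itemize}

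The main obstacle is bookkeeping rather than a single hard step. Each time, I must check that every required non-adjacency holds, so that the subgraph is genuinely induced, and that the membership pattern in $S$ matches exactly one of $F_2,F_4,F_5,F_6$. This depends on reading the figure correctly. The key inputs are the independence of $S$ (placing $x_y,y_x$ outside $S$), triangle-freeness, and part (1) feeding into part (2).
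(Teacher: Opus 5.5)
Your proof is correct and follows the paper's argument for (1): $F_4$ or $F_6$ when $z$ sees both $x_y$ and $y_x$, and $F_2$ or a forced $G^*$-edge when $z$ sees exactly one of them. For (2) you split on the adjacencies of $v$ to $x,y$, whereas the paper first shows that $v$ sees exactly one of $x,x_y$ and exactly one of $y,y_x$. The ingredients are the same: triangle-freeness, the $2K_2$ that via (1) would force $z\sim^* x$ or $z\sim^* y$, and $F_4$/$F_5$.
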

\emph{Proof of Observation} \ref{obs}. 

\noindent (1): If $z$ is adjacent to both $x_y$ and $y_x$, then $G[x,x_y,z,y_x,y]$ is an $F_4$ (if $z\not\in S$) 
or an $F_6$ (otherwise). If $z$ is adjacent to exactly one of $x_y$ and $y_x$, say $x_y$, then $G[y,y_x,z,x_y]$ 
is an $F_2$ (if $z\not\in S$) or a $2K_2$ with $y,z\in S$. The latter implies that $y$ and $z$ must 
be adjacent in $G^*$. In any case we reach a contradiction, hence (1).\\[1ex]
\noindent (2): As $G$ is bipartite and $z$ is non-adjacent to $x,y$ in $G^*$, $v$ must be adjacent in $G$ to 
exactly one of $x, x_y$ and to exactly one of $y,y_x$. If $v$ is adjacent to $x$ and $y$ or to $x_y$ and $y_x$, then 
$G[x,y,x_y,y_x,v]$ is an $F_5$ or an $F_4$. Hence (2) holds.\qed

\medskip
We now show that $G^*$ is a chain graph by proving Claims 1 -- 3 below.

\medskip
\noindent\textbf{Claim 1.} $G^*$ is $2K_2$-free.\\
\emph{Proof of Claim} 1. Let, by contradiction, $ab, cd$ be two edges of an induced $2K_2$ in $G^*$. 
Since $G$ has no $F_1$ and no $F_2$, $a,b\in S$ or $c,d\in S$. By symmetry we may assume that $a,b\in S$. 
Let $a_b\in N_G(a)$, $b_a\in N_G(b)$ such that $G[a,a_b,b,b_a]$ is a $2K_2$. 
By Observation~\ref{obs} (1), 
\begin{equation*}
 c\not\sim a_b, b_a\, \text{ and }\, d\not\sim a_b, b_a.
\end{equation*}
Hence, 
\begin{equation*}
 c, d\in S,
\end{equation*}
otherwise, $G[a,a_b,c,d]$ would be an $F_2$ (if $c,d\not\in S$) or $a, c$ would be adjacent in $G^*$ 
(if $c\in S$, $d\not\in S$) or $a, d$ would be adjacent in $G^*$ (if $c\not\in S$, $d\in S$). 

Let $c_d\in N_G(c)$, $d_c\in N_G(d)$ such that $G[c,c_d,d,d_c]$ is a $2K_2$. 
By Observation~\ref{obs} (1) again, 
\begin{equation*}
 a\not\sim c_d, d_c\, \text{ and }\, b\not\sim c_d, d_c,
\end{equation*}
and we reach a contradiction to Observation~\ref{obs} (2) with $x=a, y=b, z=c$, and $v=c_d$. This proves Claim 1.

\medskip
\noindent\textbf{Claim 2.} $G^*$ is $C_5$-free.\\
\emph{Proof of Claim} 2. Let, by contradiction, $ab, bc, cd, de, ea$ be the edges of an induced $C_5$ in $G^*$. 
Since $G$ has no $C_5$, at least one of these edges must connect two vertices in $S$. By symmetry we may assume 
that 
\begin{equation*}
 a,b\in S.
\end{equation*} 
Furthermore, since $G$ has no $F_4$ and $F_6$, $c$ or $e$ must belong to $S$. By symmetry we 
may assume that 
\begin{equation*}
 c\in S. 
\end{equation*}
Let $a_b\in N_G(a)$, $b_a,b_c\in N_G(b)$, $c_b\in N_G(c)$ such that 
each of $G[a,a_b,b,b_a]$ and $G[b,b_c,c,c_b]$ is a $2K_2$. Since $G$ is bipartite, 
\begin{equation*}
 b_a\not\sim b_c.
\end{equation*}
Moreover, by Observation~\ref{obs} (1), 
\begin{equation*}
 d\not\sim b_a,\, e\not\sim b_c. 
\end{equation*}
Furthermore,
\begin{equation*}
 d\sim b_c,\, e\sim b_a,
\end{equation*}
otherwise $G^*[d,e,b,b_c]$ or $G^*[d,e,b,b_a]$ would be a $2K_2$, contradicting Claim~1. (Notice that, as 
$b_c, b_a\not\in S$, $d\not\sim b_c$, respectively, $e\not\sim b_a$ implies $d\not\sim^* b_c$, respectively, 
$e\not\sim^* b_a$)

In particular, $b_a\not= b_c$. Now, if $d\not\in S$ or $e\not\in S$, then $G[d,e,b_a,b,b_c]$ is a $C_5$. 
If both $d$ and $e$ belong to $S$, then $G[d,b_c,b,b_a,e]$ is an $F_6$. In any case we reach a contradiction, 
and Claim~2 follows.

\medskip
\noindent\textbf{Claim 3.} $G^*$ is $C_3$-free.\\
\emph{Proof of Claim} 3. Let, by contradiction, $a,b,c$ be the vertices of a triangle in $G^*$. 
Since $G$ is bipartite, at least two of $a,b,c$ must belong to $S$. By symmetry we may assume that 
$a, b\in S$. Let $a_b\in N_G(a)$, $b_a\in N_G(b)$ such that $G[a,a_b,b,b_a]$ is a $2K_2$. Now if 
$c\not\in S$, then $G[a_b,a,c,b,b_a]$ is an $F_5$, a contradiction.

So, let $c\in S$, and let $a_c\in N_G(a)$, $b_c\in N_G(b)$, $c_a, c_b\in N_G(c)$ such that 
each of $G[b,b_c,c,c_b]$ and $G[a,a_c,c,c_a]$ is a $2K_2$. Note that possibly 
$a_b=a_c, b_a=b_c$ and $c_a=c_b$. However, as $G$ is bipartite, $a_b\not\sim a_c$, $b_a\not\sim b_c$ and 
$c_a\not\sim c_b$. 
Moreover, 
\begin{equation*}
 c\not\sim a_b,
\end{equation*}
otherwise $a_b\not= a_c$ and $a_c, a, a_b, c, c_a$ would induce an $F_5$ (if $c_a\not\sim a_b$) or $c_a,a_b,c$ 
would induce a triangle in $G$ (if $c_a\sim a_b$). 
By symmetry,
\begin{equation*}
 c\not\sim b_a,\, a\not\sim b_c,\, a\not\sim c_b,\, b\not\sim a_c,\, b\not\sim c_a.
\end{equation*}   
In particular, $\{a_b, a_c\}$, $\{b_a, b_c\}$ and $\{c_a,c_b\}$ are pairwise disjoint. 
Furthermore,
\begin{equation*}
 c_a\not\sim a_b, b_a,
\end{equation*}
otherwise, say $c_a\sim a_b$, then $c_a,a_b,b,b_a$ would induce an $F_2$ (if $c_a\not\sim b_a$) or 
$a, a_b, c_a, b_a, b$ would induce an $F_4$ in $G$ (if $c_a\sim b_a$).

But then $a, a_b, b, b_a, c$ and $c_a$ induce an $F_3$. This contradiction completes the proof of Claim 3.

\medskip
It follows from Claims 1 -- 3 that $G^*$ is a bipartite graph without induced $2K_2$, that is, (ii) $\Rightarrow$ (iii) 
holds. The proof of Theorem~\ref{thm:chainpprobe} is complete.\qed

\medskip
To conclude the note we remark that the characterization (i) $\Leftrightarrow$ (iii) stated in 
Theorem~\ref{thm:chainpprobe} easily implies an $O(m^2)$-algorithm to recognize chain partitioned probe graphs, 
where $m$ is the number of edges: 
Given a (bipartite) graph $G=(V,E)$ with an independent set $S$, the enhanced graph $G^*=(V,E\cup E')$ 
can be constructed in time $O(m^2)$ by computing all induced $2K_2$ in $G$. Then checking if $G^*$ is a chain graph, 
i.e, bipartite and $2K_2$-free, obviously can be done in the same time complexity. 

\bigskip
The author would like to thank Fr\'ed\'eric Maffray and two anonymous referees for their helpful 
comments and suggestions.


\end{document}